\documentclass[onecolumn]{IEEEtran}
\linespread{1.5}

\usepackage{color,latexsym,amsfonts,amssymb}
\usepackage{amsmath,cite}
\usepackage{amsthm}
\usepackage[mathscr]{euscript}

\usepackage{authblk}

\topmargin -.75in
\textwidth 6.5in
\oddsidemargin -.05in

\ifCLASSINFOpdf 
\usepackage[pdftex,colorlinks, 
           bookmarks=true,%
 pdftitle=Stronger\ wireless\ signals\ appear\ more\ Poisson,%
           pdfauthor=B.\ Blaszczyszyn%
           \ -\ P.\ Keeler%
           \ - N. Ross%
           \ - A. Xia,]{hyperref}  
\usepackage{hypernat} 
\hypersetup{
   bookmarksnumbered,
   pdfstartview={FitH},
   citecolor={blue},
   linkcolor={red},
   urlcolor=[rgb]{0,0.55,0},
   pdfpagemode={UseOutlines}}
\else

 \usepackage[dvips]{graphicx} 
 \usepackage[numbers,sort&compress]{natbib} 
\fi



\newcommand{\E}{\mathbb{E}} 

\newcommand{\R}{\mathbb{R}}
\newcommand{\Prob}{\mathbb{P}}
\newcommand{\ind}{\mathbf{1}}
\newcommand{\Ind}{\mathbf{1}}
\newcommand{\mean}{{\E}}
\def\tt{\tau}

\def\tup{t'\uparrow}

\def\fade{{S}} 
\def\shadow{{S}} 
\def\rayleigh{{S}} 

\def\Measure{{M}} 


\def\dtv{d_{\textrm{TV}}}
\def\law{\mathcal{L}}

\def\t#1{^{(#1)}}
\def\res#1{|_{#1}}

\def\eq{\begin{equation}}
\def\en{\end{equation}}
\def\convp{\stackrel{{{\pr}}}{\longrightarrow}}
\def\pr{\mathbb{P}}

\newtheorem{Theorem}{Theorem}


\theoremstyle{definition}


\newcommand{\OneOrTwoColumnDisplay}[2]{#1} 

\addtolength{\baselineskip}{-2pt} 
\addtolength{\textheight}{.25in}
\addtolength{\topmargin}{-0.14in}
\addtolength{\oddsidemargin}{-.10in}
\addtolength{\textwidth}{0.3in}
\addtolength{\abovedisplayskip}{-3pt}
\addtolength{\belowdisplayskip}{-4pt}
\addtolength{\skip\footins}{-5pt}

\title{Stronger wireless signals appear more Poisson}


\begin{document}

\date{\today}


\author{Paul Keeler$^*$\thanks{$^*$Weierstrass Institute, 10117 Berlin, Germany; keeler@wias-berlin.de},  
Nathan Ross$^\dagger$, Aihua Xia$^\dagger$\thanks{$^\dagger$University of Melbourne, Parkville, VIC 3010, Australia; \{aihua.xia,\,nathan.ross\}@unimelb.edu.au}
and Bart{\l }omiej~B{\l }aszczyszyn$^\ddagger$ \thanks{$^\ddagger$Inria-ENS, 75589 Paris, France; Bartek.Blaszczyszyn@ens.fr}}

\maketitle

\begin{abstract}
Keeler, Ross and Xia~\cite{keeler2016wireless} recently derived approximation and convergence results, which imply that the point process formed from the signal strengths received by an observer in a wireless network under a general statistical propagation model can be modelled by an inhomogeneous Poisson point process on the positive real line. The basic requirement for the results to apply is that there must be a large number of transmitters with different locations and random propagation effects.
The aim of this note is to apply some of the main results of~\cite{keeler2016wireless} in a less general but more easily applicable form to illustrate how the results can be applied in practice. New results are derived that show that it is the strongest signals, after being weakened by random propagation effects, that behave like a Poisson process, which supports recent experimental work. 
\end{abstract}

\begin{keywords}
stochastic geometry, propagation model,  error bounds, Poisson approximation
\end{keywords}

\section{Introduction} 
Standard assumptions for stochastic geometry models of wireless networks are that transmitter positions form a homogeneous Poisson point process, and each transmitter emits a signal whose power or strength is altered deterministically by \emph{path loss} distance effects and randomly by \emph{propagation effects}  (due to signal fading, shadowing, varying antenna gains, etc., or some combination of these). The Poisson assumption is convenient, particularly under the power-law path-loss model, because  many fundamental quantities become analytically tractable such as the signal-to-interference ratio~\cite{blaszczyszyn2014studying}, which is a simple transform of the two-parameter Poisson-Dirichlet process~\cite{sinrtwopd2014}. 


But in  wireless networks the transmitter positions do not always resemble realizations of a Poisson process.  However, even if a network does not appear Poisson,  recent results show that in the presence of sufficient propagation effects, the signal strengths observed by a single observer are close to those under the Poisson assumption. That is, for a single observer in the network, the presence of random propagation effects, which weaken the signals, renders the process of the incoming signal strengths close to a Poisson point process. For log-normal shadowing, this was confirmed by deriving a convergence theorem for  log-normal-shadowing-based (for example, Suzuki) models~\cite{blaszczyszyn2014wireless}. Keeler, Ross and Xia~\cite{keeler2016wireless} extended these Poisson convergence results considerably to the case of any reasonable path-loss model and a general class of probability distributions for propagation effects. They also derived bounds on the distance between the distributions of processes of signal strengths resulting from Poisson and non-Poisson transmitter configurations. These bounds compare the two point processes of signal strengths by matching their intensity measures and so may form the basis of methods to  experimentally confirm (or reject) whether signal strengths can be modeled with a Poisson process. 
The aim of this note is to apply these results in a more approachable way and to derive new  results.

\section{Model}
We give general conditions for the propagation model and the transmitter configuration, but stress that  some of these assumptions can be more general. We have chosen them to reflect popular model choices and to reduce mathematical technicalities, and refer the reader to~\cite{keeler2016wireless}. Let a sequence of positive i.i.d.\ random variables $ \fade_1, \fade_2,\dots$ represent propagation effects such as fading, shadowing, and so on, and let $\fade$ be an independent copy of any $\fade_i$. Assume $\ell(x)$ is a non-negative function of the form $\ell(x)=1/h(|x|)$, where $h$ is positive on the positive real line $\R_+^0:=(0,\infty)$, left-continuous and nondecreasing with generalized inverse $h^{-1}(y)=\inf\{x:h(x)>y\} $.   We assume an observer or user is located in the network at the origin and the transmitters are located according to a locally finite deterministic point pattern $\phi=\{x_i\}_{i\geq 1}$ on $\R^2\setminus\{0\}$, where the origin is removed to stop the observer being located on a transmitter, which would prevent Poisson convergence. $\phi$ could also be a single \emph{realization} of a random point process $\Phi$. If the transmitters do form a point process $\Phi$, then  our results  continue to hold with some embellishments. 

For the general propagation model, we define the signal powers or strengths at the origin $P_i:=\fade_i \ell(x_i)$ emanating from 
a transmitter located at $x_i\in \phi$, $i=1,2,\ldots$. We are interested in the point process formed by these power values 
$
 \Pi:=\{ P_1, P_2, \dots \} ,
$
but in anticipation of an infinite cluster of points at the origin, we consider point process formed from 
the inverse values
$
N:=\{ V_1,V_2, \dots \},
$
where $V_i:=1/P_i$, which is  called the \emph{propagation process} or the \emph{path-loss with fading process}.

\section{Preliminaries}
\subsection{Intensity measure of signals}
Let  $\phi(r):=\phi(B_0(r))$ or $\Phi(r):=\Phi(B_0(r))$ denote the number of points of $\phi$ or $\Phi$ located in a disk or ball $B_0(r)$ centered at the origin with radius $r$. For a deterministic point pattern of transmitters $\phi$, the intensity measure $\Measure$ of the point process $N$ satisfies
\begin{equation}\label{intensity1}
\Measure((0,t])=\E[\sum_{V_i\in N}\Ind(V_i\leq t)]=\E [\phi (  h^{-1} (t \fade))],
\end{equation}
where $\Ind$ is an indicator function. $\Measure(t):=\Measure((0,t])$ is the expected number of points of $\phi$ located in a disk of (random) radius $h^{-1}(t\fade)$. If the transmitters form a random point process $\Phi$ with intensity measure determined by $\Lambda((0,r])=\E [\Phi (r )]=:\Lambda(r)$, then the  intensity measure of $N$ satisfies
\begin{equation}\label{intensity2}
\Measure(t) =\E[\Lambda(  h^{-1} (t\fade))].
\end{equation}
 This result holds for any point process $\Phi$, see~\cite[Propositions 2.6 and 2.9]{keeler2016wireless}, and has practical applications such as statistically fitting models; see Section~\ref{s.estimateM}. If $\Phi$ is a stationary point process with intensity $d\Lambda(r)=2\pi\lambda  rdr$ (so the density of points is $\lambda$), then 
\begin{equation}\label{intensity2a}
\Measure(t) =\pi\lambda \E[(h^{-1} (t\fade))^2].
\end{equation}

The intensity measure of the original process of power values $\Pi$ induced by $\phi$ is obtained by replacing $t$ with $1/t'$ in (\ref{intensity1}),  giving
\OneOrTwoColumnDisplay{
\begin{align}\label{intensity3}
\bar{\Measure}(\tup) :=
\bar{\Measure}([t',\infty))=\E[\sum_{P_i\in \Pi}\Ind(P_i \ge t')]
 {=\E[\sum_{V_i\in N}\Ind(V_i \le 1/t')]} =\E [\phi (  h^{-1} ( \fade /t'))] .
\end{align}
}{
\begin{align}\label{intensity3}
\nonumber
\bar{\Measure}([t',\infty))&=\E[\sum_{P_i\in \Pi}\Ind(P_i \ge t')]\\ 
&\, {=\E[\sum_{V_i\in N}\Ind(V_i \le 1/t')]} =\E [\phi (  h^{-1} ( \fade /t'))] .
\end{align}
}
Similarly, for random $\Phi$  {of transmitters} with intensity measure $\Lambda$, expression (\ref{intensity2}) gives
\begin{equation}\label{intensity4}
 {\bar{\Measure}([t',\infty))=}\E[\Lambda(  h^{-1} (\fade/t'))].
\end{equation}  

If the transmitters form a Poisson point process $\Phi$ with intensity measure $\Lambda$, then the Poisson mapping theorem says that the process $N$ is a Poisson point process on the positive real line with intensity measure determined by~(\ref{intensity2}), while the process of power values $\Pi$ is also a Poisson process with intensity measure satisfying~(\ref{intensity4}). If $N$ is not induced by an underlying Poisson process of transmitters, but $N$ is still stochastically close to a Poisson point process with intensity measure $\Measure$, then the process $N$ is close to the process that is induced by transmitters placed according to a Poisson process, so one can assume transmitter locations form a Poisson process.  The error made in this substitution can be quantified~\cite[Theorem~2.7]{keeler2016wireless}.

\subsection{Examples of $\Measure(t)$}
The standard path-loss model is $\ell(x)=|x|^{-\beta}$, where $\beta>2$, hence $h^{-1}(y)=y^{1/\beta}$, $y>0$.  If the transmitters form a stationary point process $\Phi$ on $\R^2$ with density $\lambda$, the resulting intensity measure satisfies $\Measure(t)=\lambda  \pi t^{2/\beta} \E(\fade^{2/\beta})$, which depends on $S$ through only one moment $\E(\fade^{2/\beta})$. For Poisson $\Phi$, one can assume $S$ is, for example, exponential, perform calculations, and then remove the exponential assumption and change to another model of $S$ by rescaling $M$. Writing $r=\vert x \vert$, the multi-slope model is
\OneOrTwoColumnDisplay{
$
h(r)
=\sum_{i=1}^{{k+1}}  b_i^{{-1}} r^{\beta_i}   \ind({r_{i-1}\leq r <r_i}) ,
$
}{
$$
h(r)
=\sum_{i=1}^{{k+1}}  b_i^{{-1}} r^{\beta_i}   \ind({r_{i-1}\leq r <r_i}) ,
$$
}
where $0={r_0<}r_1<\cdots< r_k<r_{k+1}=\infty$, $\beta_i>0$, and $b_i>0$ are set appropriately so  $h$ is continuous. Each interval $[r_{i-1},r_i)$ is disjoint, so the inverse of $h(r)$ is
$
h^{-1}(s)=\sum_{i=1}^{{k+1}}  c_i s^{1/\beta_i}  \ind({s_{i-1}\leq s   <s_i }) ,
$
where $s_i=b_i^{{-1}} r_i^{\beta_i}$ {and $c_i=b_i^{1/\beta_i}$.}  For stationary $\Phi$,
\OneOrTwoColumnDisplay{
$
\Measure(t)=
2\pi\lambda \sum_{i=1}^k t^{2/\beta_i} c_i \E\left[ {  {\fade}^{2/\beta_i} \ind({s_{i-1}\leq t {\fade} <s_i}})  \right]. 
$
}{
$$
\Measure(t)=
2\pi\lambda \sum_{i=1}^k t^{2/\beta_i} c_i \E\left[ {  {\fade}^{2/\beta_i} \ind({s_{i-1}\leq t {\fade} <s_i}})  \right]. 
$$}
Care must be taken when determining the generalized inverse $h^{-1} $. For example, the function $\ell(x)=e^{-\beta |x|}$ gives $h^{-1}(y)=(1/\beta)\ln^+(y):=(1/\beta)\max[0, \ln (y)]$ for $y \geq 0$, and not $(1/\beta)\ln(y)$. For stationary $\Phi$, 
$
\Measure(t)= \frac{\lambda\pi}{\beta^2} \mean([ \ln^+(t\fade) ]^2).
$
If $\fade$ is continuous  on $[0, \infty)$ with probability density $f_{\fade}$, then
$
\Measure(t)= \frac {\lambda\pi} { t \beta^2 } \int _1^{\infty} (\ln x)^2 f_{\fade}(x/t) dx.
$

\subsection{Approximating signals with a Poisson process}
For each transmitter $x_i\in\phi$ (deterministic with arbitrary indexing), let  $p_{x_i}(t):= \Prob(0< 1/(\ell(x_i)\fade_i)\leq t)=\Prob(0< V_i \leq t)$, 
we want to approximate the point process $N=\{V_i\}_{i\geq1}$ with a Poisson point process $Z=\{Y_i\}_{i\geq1}$ with intensity measure $\Measure$, given by (\ref{intensity1}) or, equivalently, $M(t)=\sum_{x_i \in\phi} p_{x_i}(t)$, so we need to introduce a probability metric. For two probability measures $\mu$ and $\nu$ defined on the same probability space with $\sigma$-algebra $\mathcal{F}$ of events, the total variation distance is $\dtv(\mu,\nu): = \sup_{A\in \mathcal{F}}|\mu(A)  -\nu(A)|$, which is a strong metric that bounds the largest difference in probabilities between two distributions. We write $\law(U)$ to denote the distribution or law of a point process $U$ (or other random objects).  The two point processes $Z$ and $N$ are stochastically close if their laws $\law(Z)$ and $\law(N)$ have a small total variation $\dtv(\law(Z), \law(N))$. 
To obtain meaningful values for the total variation, one must compare the point processes on a finite interval (it's of no practical use to  compare infinite configurations of points because the difference of undetectable weak signals will dictate the total variation distance), which is  a slight but necessary restriction.  {We present an approximation theorem~\cite[Theorem~2.2]{keeler2016wireless} for the restricted parts of $Z$ and $N$.}
\begin{Theorem}\label{thmdtv}  {For $\tt>0$, let $N\res\tt$ and $Z\res\tt$ be the points of the point processes $N$ and $Z$ restricted to the interval $(0,\tt]$.  Then}
\OneOrTwoColumnDisplay{
\begin{align}
\nonumber \frac{1}{32}\min[1,& 1/\Measure(\tt)] \sum_{x_i\in \phi} p_{x_i}(\tt)^2  \leq
\dtv(\law(Z\res \tt), \law(N\res \tt))   \leq \sum_{x_i\in \phi} p_{x_i}(\tt)^2  \leq \Measure(\tt) \max_{x_i\in \phi} p_{x_i}(\tt) .
\end{align}
}{
\begin{align}
\nonumber \frac{1}{32}\min[1,& 1/\Measure(\tt)] \sum_{x_i\in \phi} p_{x_i}(\tt)^2  \leq
\dtv(\law(Z\res \tt), \law(N\res \tt))  \\ \leq &\sum_{x_i\in \phi} p_{x_i}(\tt)^2  \leq \Measure(\tt) \max_{x_i\in \phi} p_{x_i}(\tt) .
\end{align}
}

\end{Theorem}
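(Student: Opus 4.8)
The plan is to realise $N\res\tt$ as a superposition of independent Bernoulli point processes and compare it termwise with the corresponding Poisson pieces. For each $x_i\in\phi$, let $\xi_i$ be the point process that places a single point at $V_i=1/(\ell(x_i)\fade_i)$ when $V_i\le\tt$ and is empty otherwise. Since the $\fade_i$ are i.i.d.\ (and $\phi$ is deterministic, or one conditions on the realization), the $\xi_i$ are independent, $\xi_i((0,\tt])$ is $\mathrm{Be}(p_{x_i}(\tt))$, and $N\res\tt=\sum_i\xi_i$. Let $\mu_i$ be the law of $V_i$ conditioned on $\{V_i\le\tt\}$, a probability measure on $(0,\tt]$, and let $\eta_i$ be a Poisson point process with intensity measure $p_{x_i}(\tt)\mu_i$. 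Because $\sum_i p_{x_i}(\tt)\mu_i((0,t])=\sum_i p_{x_i}(t)=\Measure(t)$ for $t\le\tt$, the superposition $\sum_i\eta_i$ is a Poisson process with intensity $\Measure\res{(0,\tt]}$, i.e.\ it is equal in law to $Z\res\tt$. (One may assume $\Measure(\tt)<\infty$; otherwise both bounds are trivial or immediate.)

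For the \textbf{upper bound} I would first bound each summand. Coupling $\xi_i((0,\tt])$ and $\eta_i((0,\tt])$ maximally and, on the event that both counts are $0$ or both are $1$, placing the (at most one) point of each at the same location drawn from $\mu_i$, shows $\dtv(\law(\xi_i),\law(\eta_i))\le\dtv(\mathrm{Be}(p_{x_i}(\tt)),\mathrm{Po}(p_{x_i}(\tt)))=p_{x_i}(\tt)(1-e^{-p_{x_i}(\tt)})\le p_{x_i}(\tt)^2$. Then the standard subadditivity of total variation under independent superposition (couple the pairs $(\xi_i,\eta_i)$ independently; the superpositions agree off the union of the ``bad'' events) gives $\dtv(\law(Z\res\tt),\law(N\res\tt))=\dtv\bigl(\law(\sum_i\eta_i),\law(\sum_i\xi_i)\bigr)\le\sum_i\dtv(\law(\xi_i),\law(\eta_i))\le\sum_i p_{x_i}(\tt)^2$. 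The last inequality $\sum_i p_{x_i}(\tt)^2\le\Measure(\tt)\max_i p_{x_i}(\tt)$ is immediate from $p_{x_i}(\tt)^2\le p_{x_i}(\tt)\max_j p_{x_j}(\tt)$ together with $\sum_i p_{x_i}(\tt)=\Measure(\tt)$.

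For the \textbf{lower bound} I would push both point processes forward under the measurable map that counts points in $(0,\tt]$. Total variation cannot increase under a map, so $\dtv(\law(Z\res\tt),\law(N\res\tt))\ge\dtv\bigl(\law(\sum_i B_i),\mathrm{Po}(\Measure(\tt))\bigr)$, where $B_i:=\xi_i((0,\tt])$ are independent $\mathrm{Be}(p_{x_i}(\tt))$ with $\sum_i p_{x_i}(\tt)=\Measure(\tt)$. It then remains to invoke the classical Barbour--Hall lower bound for Poisson approximation of a sum of independent indicators, namely $\dtv\bigl(\law(\sum_i B_i),\mathrm{Po}(\sum_i p_i)\bigr)\ge\tfrac1{32}\min[1,1/\sum_i p_i]\sum_i p_i^2$, applied with $p_i=p_{x_i}(\tt)$.

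The main obstacle is the lower bound: the coupling/subadditivity argument underpinning the upper bound is routine, but the sharp lower bound with the explicit constant $\tfrac1{32}$ is not elementary --- it is exactly the Barbour--Hall estimate, whose proof runs through Stein's equation and careful bounds on its solutions, and I would cite rather than reprove it. A minor secondary point is to verify that the infinite superpositions $\sum_i\xi_i$, $\sum_i\eta_i$ are well-defined on $(0,\tt]$ and that the termwise coupling is legitimate when $\phi$ is infinite, both of which hold as soon as $\Measure(\tt)<\infty$.
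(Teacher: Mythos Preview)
Your argument is correct. Note, however, that the paper does not itself prove Theorem~\ref{thmdtv}: it simply quotes the result as \cite[Theorem~2.2]{keeler2016wireless}. Your reconstruction---writing $N\res\tt$ as an independent superposition of Bernoulli point processes, coupling each with its Poisson counterpart to get the upper bound $\sum_i p_{x_i}(\tt)^2$, and pushing forward to point counts and invoking the Barbour--Hall lower bound for the other direction---is exactly the standard route and is essentially how the cited result is established. The computation $\dtv(\mathrm{Be}(p),\mathrm{Po}(p))=p(1-e^{-p})\le p^2$ and the subadditivity step are both fine, and your identification of the Barbour--Hall inequality as the non-elementary ingredient (to be cited rather than reproved) is appropriate.
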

 {Note that} since the theorem is about the inverse signal strengths,  {$1/\tt$ can be interpreted as the smallest possible power value of interest for an observer in the network and $\Measure(\tt)$ as the expected number of signals with power value}  {greater than or equal to}  {$1/\tt$.} Essentially the theorem says that if the network has many transmitters with independent signal strengths, and the chance
is small that any particular transmitter has  {signal strength more powerful than~$1/\tt$}, then the point process of signal strengths  should be close to Poisson with the same intensity measure.  {If our interest is on the strongest $\Measure(\tt)=3$, say, signals, then, as we increase the number of transmitters, $\tt$ decreases, hence $p_{x_i}(\tt)$ and the bounds decrease, meaning that these signal strengths are closer to a Poisson process.}
If  $x_*$ is the transmitter closest to the origin (with fading $S_*$), then $\max_{x_i\in \phi} p_{x_i}(\tt) =\Prob(0< h(x_*)/\fade_*  \leq \tt)$, which is not as tight as the $\sum_{x_i\in \phi} p_{x_i}(\tt)^2$ term but is often easier to calculate; we examine the theorem further in Section~\ref{s.theoremanalysis}. 

\subsection{Poisson convergence}
Theorem~\ref{thmdtv} is an approximation result, but it has been used to show Poisson convergence of the process $N$ on the whole positive line~\cite[Theorem 1.1]{keeler2016wireless}. 
Let $\convp$ denote convergence in probability and $L(t)$ be a non-decreasing
function on $\R_0^+$, which induces an intensity measure of a Poisson process.  

\begin{Theorem}\label{thmshadow}
Assume 
\begin{equation}\label{lambda}
\lim\limits_{r\rightarrow\infty}\frac{\phi(r)}{\pi r^2} = \lambda,
\end{equation}
 so that the transmitters have a nearly constant density. Let $(\fade(v))_{v\geq0}$ be a family of positive random variables indexed by some non-negative parameter~$v$, $N\t v$ be the point
process generated by $\fade(v)$, $g$ and $\phi$. If as $v\to\infty$, 
(i) $\fade(v)\convp0$ and (ii) $ \Measure^{(v)}(t):= \mean [\phi(h^{-1}(\fade(v) t)] \to L(t)$, where $t>0$, then $N\t v$  converges weakly to a Poisson process {on $\R_+^0$} with intensity measure $L$.
\end{Theorem}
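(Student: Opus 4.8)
The plan is to combine the quantitative approximation in Theorem~\ref{thmdtv} with the fact that a Poisson process depends continuously (in the vague topology) on its intensity measure. Since any nonnegative continuous test function with compact support in $\R_+^0=(0,\infty)$ is supported in some interval $(0,\tt]$, weak convergence of $N\t v$ to a Poisson process on $\R_+^0$ follows once we show, for every $\tt>0$ that is a continuity point of $L$ (these are cofinal in $(0,\infty)$, as $L$ is non-decreasing), that $N\t v\res{\tt}\convd Z\res{\tt}$ as point processes on $(0,\tt]$, where $Z$ is a Poisson process on $\R_+^0$ with intensity measure $L$. Fix such a $\tt$ and introduce, for each $v$, an auxiliary Poisson process $Z\t v$ on $\R_+^0$ with intensity measure $\Measure\t v$. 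The argument then has two pieces: (a) $\dtv\big(\law(N\t v\res{\tt}),\law(Z\t v\res{\tt})\big)\to0$, and (b) $Z\t v\res{\tt}\convd Z\res{\tt}$; since total-variation closeness implies weak closeness, (a) and (b) together give $N\t v\res{\tt}\convd Z\res{\tt}$.

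For (a), Theorem~\ref{thmdtv} gives
\begin{align*}
\dtv\big(\law(N\t v\res{\tt}),\law(Z\t v\res{\tt})\big)\ \le\ \Measure\t v(\tt)\,\max_{x_i\in\phi}p_{x_i}\t v(\tt),
\end{align*}
where $p_{x_i}\t v(\tt)=\Prob\big(0<1/(\ell(x_i)\fade(v))\le\tt\big)=\Prob\big(\fade(v)\ge h(|x_i|)/\tt\big)$. By hypothesis (ii) the first factor converges to $L(\tt)$, which is finite. For the second factor, let $x_*$ be a transmitter nearest the origin; since $h$ is non-decreasing, the remark following Theorem~\ref{thmdtv} gives $\max_{x_i\in\phi}p_{x_i}\t v(\tt)=\Prob\big(\fade(v)\ge h(|x_*|)/\tt\big)$. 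Because $\phi$ has no point at the origin we have $|x_*|>0$, and because $h$ is positive on $(0,\infty)$ the constant $c:=h(|x_*|)/\tt$ is strictly positive; hypothesis (i), namely $\fade(v)\convp0$, then forces $\Prob(\fade(v)\ge c)\to0$. Hence the right-hand side tends to $0$, which is (a).

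For (b), the process $Z\t v\res{\tt}$ is Poisson on $(0,\tt]$ with intensity measure $\Measure\t v$ restricted to $(0,\tt]$. Hypothesis (ii) states $\Measure\t v(t)\to L(t)$ for every $t>0$, which is precisely vague convergence of $\Measure\t v$ to $L$ on $(0,\infty)$; the continuity of the Poisson-process functional (seen most easily through Laplace functionals, since for $f\in C_K^+((0,\infty))$ the integral $\int(1-e^{-f})\,d\Measure\t v$ only involves $\Measure\t v$ on a compact subinterval of $(0,\tt]$) then yields $Z\t v\res{\tt}\convd Z\res{\tt}$. Combining with (a) we obtain $N\t v\res{\tt}\convd Z\res{\tt}$ for every continuity point $\tt$ of $L$, and the reduction in the first paragraph finishes the proof.

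The step I expect to be most delicate is not (a) --- which is essentially immediate from Theorem~\ref{thmdtv} and the two hypotheses --- but the bookkeeping around (b) and the opening reduction: one must be careful with the topology on the space of locally finite counting measures on $\R_+^0$, with the role of the continuity points of $L$ (which let one restrict to $(0,\tt]$ without the endpoint causing discontinuities), and with the local finiteness of $L$ implicit in assumption (ii); in particular one should record that $L(t)<\infty$ for all finite $t$, which is used both to bound the first factor in (a) and to guarantee that $Z$ is a genuine point process on $\R_+^0$.
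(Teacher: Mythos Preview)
The paper does not give its own proof of this theorem; it is quoted from \cite[Theorem~1.1]{keeler2016wireless}, with the surrounding text indicating only that Theorem~\ref{thmdtv} is the tool used to derive it. Your argument is correct and is exactly the route the paper (and the companion paper) signposts: introduce an auxiliary Poisson process $Z\t v$ with mean measure $\Measure\t v$, use the upper bound of Theorem~\ref{thmdtv} together with hypotheses (i) and (ii) to force $\dtv(\law(N\t v\res\tt),\law(Z\t v\res\tt))\to0$, and then pass from $Z\t v$ to $Z$ via the vague convergence $\Measure\t v\to L$. Your handling of the nearest transmitter $x_*$ and of the continuity points of $L$ is also the standard bookkeeping one needs here, and the caveat you flag about $L(t)<\infty$ is already built into the paper's hypothesis that $L$ ``induces an intensity measure of a Poisson process.''
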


The parameter $v$ can be any parameter of the distributions $\fade(v)$.  
To give an example, we assume $\ell= \vert x\vert^{-\beta}$, a transmitter configuration $\phi$ that meets condition~(\ref{lambda}), and  $\{ \shadow^{(v)}_i\}_{i\geq 1}$ are iid log-normal variables, such that $\shadow^{(v)}_i=\exp[v B_i-v^2/\beta]$, where each $B_i$ is a standard normal variable, so $E[(\shadow^{(v)}_i)^{2/\beta}]=1$. Then  {for any $s>0$,} $\Prob(\, B_i> \log(s)/v +v/\beta\, )\to 0$ or $\shadow^{(v)}\convp0 $  as $v\rightarrow \infty$, which is condition (i) in Theorem~\ref{thmshadow}, and $\Measure^{(v)}(t)=\lambda\pi t^{2/\beta}$, so we recover~\cite[Theorem~7]{blaszczyszyn2014wireless}.

We consider a Rayleigh model, where $\{ \rayleigh^{(v)}_i\}_{i\geq 1}$ are iid exponential variables with mean $1/v$.   {For any $s>0$,} $\Prob(\, \rayleigh^{(v)}>s\,)=e^{-v s}$, then $ \rayleigh^{(v)}\convp0 $  as $v\rightarrow \infty$, so condition (i) is met again. But  $\Measure^{(v)}(t) \rightarrow 0 $ as $v\rightarrow\infty$, due to $\E([\rayleigh^{(v)}]^{2/\beta})= \Gamma(2/\beta+1)/v^{2/\beta} $. In the limit as $v\rightarrow\infty$, the Rayleigh model does not give a meaningful $L$. But under this and other models, the process $N^{(v)}$ can still be approximated with a Poisson process with intensity measure $\Measure^{(v)}$ for sufficiently large $v$. 


If we replace the non-random $\phi$ with a random point process $\Phi$, all the above results hold with suitable modifications. But in the limit as $\fade(v)\convp0$, it is possible to see a Cox process (a Poisson process with a random intensity measure), due to the extra randomness from $\Phi$.  To obtain a Poisson process in this limit, $\Phi$ must meet certain conditions, which are satisfied by, for example, the Ginibre process~\cite[Section~2.1]{keeler2016wireless}.  These results further demonstrate that in the presence of strong random propagation effects the  {signal strengths can appear as a Poisson or Cox process, whereas the transmitters form some other point process. }

\section{New results}
\subsection{Analysis of Theorem~\ref{thmdtv}}\label{s.theoremanalysis}
The upper bound in Theorem~\ref{thmdtv} is an increasing function in $\tau$. The smaller we make our interval $(0,\tt]$, where we only consider signals of power values greater than $1/\tt$, then the smaller the bounds and so the more Poisson the signals in $(0,\tt]$ behave. In other words, the stronger signals are stochastically the more Poisson ones. We believe that this is the first appearance of this observation, by purely probabilistic arguments, but it is supported by recent work on fitting a Poisson model to a real cellular phone network~\cite[Figure 6]{blaszczyszyn2015spatial}. A functional that is dependent on $k$ strongest signals can be well-approximated with a functional of a Poisson process with intensity $M(t)$. Arguably, the first few strongest signals  matter the most in the calculations of the signal-to-interference ratio, further explaining why the Poisson process has been a good model in practice
To calculate this ratio, Haenggi and Ganti~\cite{gantisir} have shown for a power-law path-loss model that Poisson network models can be used to approximate network models based on other stationary point processes, which may be connected to our Poisson approximation results. 

If $\tt$ is made too small, then the bounds lose meaning as there will be no signals in  $(0,\tt]$. Conversely, for large $\tt$ the upper bound will also lack meaning as it will be greater than one. What remains to be explored is for which values of $\tt$, under suitable propagation models, give meaningful values for the bounds in Theorem~\ref{thmdtv}. The theorem can be adapted easily for the original process of power values $\Pi$~\cite[Remark 2.3]{keeler2016wireless}.

\subsection{Statistics of the strongest signals}
An important feature of the total variation distance is that the bounds in Theorem~\ref{thmdtv}  will hold under simple functions of the two truncated  point processes $N\res\tt$ and $Z\res\tt$. We leverage this fact and the coupling interpretation of total variation distance to derive new bounds for the order statistics of the whole processes $N$ and $Z$. Let $V_{(1)}\leq  V_{(2)}  \leq \dots$ denote the increasing order statistics of the  process $N=\{V_i\}_{i\geq1}$. Similarly, let $Y_{(1)}\leq  Y_{(2)}  \leq \dots$ be the order statistics of the Poisson process $Z=\{Y_i\}_{i\geq1}$. 
\begin{Theorem}\label{thmorder}
For any $\tau>0$,
\OneOrTwoColumnDisplay{
\begin{align*}
\dtv(\law(V_{(1)},\ldots, V_{(k)}),  \law(Y_{(1)},\ldots,Y_{(k)}) )  \leq   \sum_{x_i\in \phi} p_{x_i}(\tt)^2+ \sum_{j=0}^{k-1} \frac{\Measure(\tau)^j e^{-\Measure(\tau)}}{j!}.
\end{align*}
}{
\begin{align*}
\dtv&(\law(V_{(1)},\ldots, V_{(k)}),  \law(Y_{(1)},\ldots,Y_{(k)}) ) \\ & \leq   \sum_{x_i\in \phi} p_{x_i}(\tt)^2+ \sum_{j=0}^{k-1} \frac{\Measure(\tau)^j e^{-\Measure(\tau)}}{j!}.
\end{align*}
}
\end{Theorem}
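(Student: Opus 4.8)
The plan is to combine the coupling representation of total variation distance, $\dtv(\law(X),\law(Y))=\min\Prob(X\neq Y)$ over all couplings, with one elementary fact about order statistics: if a locally finite configuration on $(0,\infty)$ has at least $k$ points in $(0,\tt]$, then its $k$ smallest points all lie in $(0,\tt]$ and therefore coincide with the $k$ smallest points of its restriction to $(0,\tt]$. Consequently, on any event on which $N\res\tt$ and $Z\res\tt$ agree as configurations \emph{and} carry at least $k$ points, the vectors $(V_{(1)},\dots,V_{(k)})$ and $(Y_{(1)},\dots,Y_{(k)})$ are identical. This is the observation that turns Theorem~\ref{thmdtv} into a statement about the strongest signals.

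First I would construct the coupling. Spaces of locally finite configurations on $(0,\infty)$ and on $(0,\tt]$ are Polish, so there is a maximal coupling $(\tilde N\res\tt,\tilde Z\res\tt)$ of $N\res\tt$ and $Z\res\tt$, i.e.\ one with the correct marginals and $\Prob(\tilde N\res\tt\neq\tilde Z\res\tt)=\dtv(\law(N\res\tt),\law(Z\res\tt))$. Using regular conditional distributions (available on Polish spaces) I would extend this to a coupling $(\tilde N,\tilde Z)$ of the \emph{full} processes $N$ and $Z$: conditionally on the truncations just realized, draw the points in $(\tt,\infty)$ from the respective conditional laws, independently across the two processes. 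This preserves the marginals $\law(N)$ and $\law(Z)$ while keeping $\Prob(\tilde N\res\tt\neq\tilde Z\res\tt)=\dtv(\law(N\res\tt),\law(Z\res\tt))$. (Alternatively one could use $\epsilon$-optimal couplings and let $\epsilon\downarrow0$ at the end.)

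Let $\tilde V_{(1)}\le\tilde V_{(2)}\le\cdots$ and $\tilde Y_{(1)}\le\tilde Y_{(2)}\le\cdots$ be the order statistics of $\tilde N$ and $\tilde Z$. On $\{\tilde N\res\tt=\tilde Z\res\tt\}$ one has $\tilde N((0,\tt])=\tilde Z((0,\tt])$, so if moreover $\tilde Z((0,\tt])\ge k$ then, by the fact above, both $k$-vectors equal the tuple of $k$ smallest points of the common truncation; hence
\[
\{(\tilde V_{(1)},\dots,\tilde V_{(k)})\neq(\tilde Y_{(1)},\dots,\tilde Y_{(k)})\}\subseteq\{\tilde N\res\tt\neq\tilde Z\res\tt\}\cup\{\tilde Z((0,\tt])<k\}.
\]
Taking probabilities, using that $\tilde Z((0,\tt])$ is Poisson distributed with mean $\Measure(\tt)$ (as $Z$ is Poisson with intensity measure $\Measure$) and that $\Prob(\tilde N\res\tt\neq\tilde Z\res\tt)\le\sum_{x_i\in\phi}p_{x_i}(\tt)^2$ by Theorem~\ref{thmdtv}, gives
\[
\Prob\big((\tilde V_{(1)},\dots,\tilde V_{(k)})\neq(\tilde Y_{(1)},\dots,\tilde Y_{(k)})\big)\le\sum_{x_i\in\phi}p_{x_i}(\tt)^2+\sum_{j=0}^{k-1}\frac{\Measure(\tt)^je^{-\Measure(\tt)}}{j!}.
\]
Since $(\tilde V_{(1)},\dots,\tilde V_{(k)})$ and $(\tilde Y_{(1)},\dots,\tilde Y_{(k)})$ form a coupling of $\law(V_{(1)},\dots,V_{(k)})$ and $\law(Y_{(1)},\dots,Y_{(k)})$, the left-hand side bounds the total variation distance between these two laws, which is exactly the asserted inequality.

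The step requiring the most care is producing the coupling of the full processes whose restriction to $(0,\tt]$ is the maximal coupling of $N\res\tt$ and $Z\res\tt$; this is where one invokes regular conditional distributions on Polish configuration spaces. The other point worth spelling out is that on $\{\tilde N\res\tt=\tilde Z\res\tt\}$ the events $\{\tilde N((0,\tt])<k\}$ and $\{\tilde Z((0,\tt])<k\}$ coincide, which is precisely why only the single Poisson tail $\Prob(Z((0,\tt])<k)=\sum_{j=0}^{k-1}\Measure(\tt)^je^{-\Measure(\tt)}/j!$ appears and no analogous term for $N$ is needed — a naive data-processing argument applied to a function of the truncated processes alone would leave such a term behind. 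Finally one should fix a convention (for instance, padding with $+\infty$) for $(V_{(1)},\dots,V_{(k)})$ and $(Y_{(1)},\dots,Y_{(k)})$ when a process has fewer than $k$ points; with this convention the argument is unchanged, since padding is never triggered on the good event.
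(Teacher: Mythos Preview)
Your proof is correct and follows essentially the same route as the paper: invoke the coupling characterization of total variation, take an optimal coupling of the truncated processes supplied by Theorem~\ref{thmdtv}, and note that whenever the truncations agree and contain at least $k$ points the first $k$ order statistics coincide, leaving only $\Prob(\tilde N\res\tt\neq\tilde Z\res\tt)+\Prob(\#(\tilde Z\res\tt)\le k-1)$ to bound. The one cosmetic difference is that you extend the coupling to the full processes $N$ and $Z$ via regular conditional distributions before reading off order statistics, whereas the paper defines the coupling of the order-statistic vectors directly from the truncations (declaring the vectors independent on the bad event); your version is a touch more explicit about securing the correct marginals, but the argument and resulting bound are identical.
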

\begin{proof}
An equivalent definition of total variation distance between probability measures $\mu, \nu$ is  {\cite[p.~254]{BHJ92}}
\OneOrTwoColumnDisplay{
$
\dtv(\mu, \nu)=\min_{U\sim \mu, W\sim \nu} \Prob(U \not= W),
$
}{
$$
\dtv(\mu, \nu)=\min_{U\sim \mu, W\sim \nu} \Prob(U \not= W),
$$
}
where the minimum is taken over all couplings of $\mu$ and $\nu$. Thus Theorem~\ref{thmdtv} implies that there is a coupling 
$(\tilde Z \res \tau,\tilde N \res \tau)$
of $Z\res \tau$ and $N\res \tau$ such that $\Prob(\tilde Z\res \tau\not=\tilde N\res \tau)\leq \sum_{x_i\in \phi} p_{x_i}(\tt)^2$. We define a 
coupling of $(V_{(1)},\ldots, V_{(k)})$ and $(Y_{(1)},\ldots, Y_{(k)})$, by taking the first $k$ order statistics of $\tilde Z \res \tau$
and $\tilde N \res \tau$ if there are at least $k$ points in $\tilde Z \res \tau$ and  {$\tilde N \res \tau$,} otherwise we set the two vectors in the coupling to be independent.
For this coupling, the probability that the two vectors are not equal is upper bounded by
\OneOrTwoColumnDisplay{
$
\Prob(\tilde Z\res \tau\not=  \tilde N\res \tau)+\Prob(\#(\tilde Z\res \tau)\leq k-1) \leq   \sum_{x_i\in \phi} p_{x_i}(\tt)^2+ \sum_{j=0}^{k-1} \frac{\Measure(\tau)^j e^{-\Measure(\tau)}}{j!}. \qedhere
$
}{
\begin{align*}
\Prob(\tilde Z\res \tau\not= & \tilde N\res \tau)+\Prob(\#(\tilde Z\res \tau)\leq k-1) \\&\leq   \sum_{x_i\in \phi} p_{x_i}(\tt)^2+ \sum_{j=0}^{k-1} \frac{\Measure(\tau)^j e^{-\Measure(\tau)}}{j!}. \qedhere
\end{align*}
}
\end{proof}

To interpret the theorem, note that the second term in the bound can be made small by choosing $\tau$ large and then for fixed $\tau$ the first term is small if each fading variable $\fade_i$ is small with good probability. As $\tau$ grows, the second part of the bound improves while the first part worsens. This reflects the trade-off between the strongest signals being nearly a Poisson process and the bounds needing to apply to the order statistics of the whole process. 
For a Poisson $N$, the distribution of $V_{(1)}$ is 
$
\Prob( V_{(1)} \leq t) = 1-e^{-\Measure(t)},
$
which can approximate the  distribution  of $Y_{(1)}$. 
If  $V_{(1)}$  and $Y_{(1)}$ are continuous on $[0, \infty)$ with probability densities $f_V$ and $f_Y$,  then
$
 \dtv(\law(V_{(1)}), \law(Y_{(1)}) )  = \frac{1}{2}\int_0^{\infty}|f_V(x)-f_Y(x)|dx .
$

\subsection{Poisson convergence of strongest signals}
Theorem~\ref{thmorder} leads to a Poisson convergence result. 
\begin{Theorem}\label{thmshadoworder}
Under the notation and assumptions of Theorem~\ref{thmshadow},
for $i\geq 1$ {, let $V\t v_{(i)}$ and $Y_{(i)}$ be the $i$th smallest order statistic of the process $N\t v$ 
and a Poisson process on $\R_+^0$
with intensity measure~$L$ respectively}. Then for fixed $k\geq1$ and as $v\to\infty$,
\OneOrTwoColumnDisplay{
$
 {\law(V\t v_{(1)},\ldots, V\t v_{(k)})}\to \law(Y_{(1)},\ldots,Y_{(k)}).
$
}{
\[
 {\law(V\t v_{(1)},\ldots, V\t v_{(k)})}\to \law(Y_{(1)},\ldots,Y_{(k)}).
\]
}
\end{Theorem}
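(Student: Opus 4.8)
Write $\mathbf V\t v=(V\t v_{(1)},\ldots,V\t v_{(k)})$ and $\mathbf Y=(Y_{(1)},\ldots,Y_{(k)})$. The plan is to route everything through Theorem~\ref{thmorder}, with one interposed step: that theorem bounds the total variation distance between the order statistics of $N\t v$ and those of a Poisson process with intensity measure $\Measure\t v$, whereas the target limit $\mathbf Y$ comes from a Poisson process with intensity $L$, and $\Measure\t v$ need not converge to $L$ \emph{in total variation} (the two measures can be mutually singular, e.g.\ when $\phi$ is a lattice). So rather than comparing $N\t v$ with the limit directly, I would introduce a Poisson process $\tilde Z\t v$ on $\R_+^0$ with intensity measure $\Measure\t v$ and let $\tilde{\mathbf Y}\t v=(\tilde Y\t v_{(1)},\ldots,\tilde Y\t v_{(k)})$ be its first $k$ order statistics. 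Fixing a metric $\rho$ for weak convergence on $[0,\infty)^k$ with $\rho\le\dtv$ (for instance L\'evy--Prokhorov), the triangle inequality gives
\[
\rho\big(\law(\mathbf V\t v),\law(\mathbf Y)\big)\;\le\;\dtv\big(\law(\mathbf V\t v),\law(\tilde{\mathbf Y}\t v)\big)+\rho\big(\law(\tilde{\mathbf Y}\t v),\law(\mathbf Y)\big),
\]
so it suffices to drive both right-hand terms to $0$ as $v\to\infty$.

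For the first term, fix $\tau>0$ and apply Theorem~\ref{thmorder} to $N\t v$, whose intensity measure is $\Measure\t v$, to get
\[
\dtv\big(\law(\mathbf V\t v),\law(\tilde{\mathbf Y}\t v)\big)\;\le\;\sum_{x_i\in\phi} p_{x_i}\t v(\tau)^2+\sum_{j=0}^{k-1}\frac{\Measure\t v(\tau)^j e^{-\Measure\t v(\tau)}}{j!},
\]
where $p_{x_i}\t v(\tau)=\Prob(0<h(|x_i|)/\fade\t v\le\tau)$. By the last inequality of Theorem~\ref{thmdtv}, $\sum_{x_i\in\phi}p_{x_i}\t v(\tau)^2\le\Measure\t v(\tau)\,\max_{x_i\in\phi}p_{x_i}\t v(\tau)$; here $\Measure\t v(\tau)\to L(\tau)<\infty$ by hypothesis~(ii), while $\max_{x_i\in\phi}p_{x_i}\t v(\tau)=\Prob(\fade\t v\ge h(|x_*|)/\tau)\to0$, because the nearest transmitter $x_*$ has $|x_*|>0$, hence $h(|x_*|)>0$, and $\fade\t v\convp0$ by hypothesis~(i); so this first sum vanishes in the limit. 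Since $\Measure\t v(\tau)\to L(\tau)$, the second sum---which is the probability that a Poisson$(\Measure\t v(\tau))$ random variable is at most $k-1$---tends to $\sum_{j=0}^{k-1}L(\tau)^je^{-L(\tau)}/j!$. Hence $\limsup_{v\to\infty}\dtv(\law(\mathbf V\t v),\law(\tilde{\mathbf Y}\t v))\le\sum_{j=0}^{k-1}L(\tau)^je^{-L(\tau)}/j!$ for every $\tau>0$, and letting $\tau\to\infty$---using $L(\tau)\to\infty$, which is implicit in the $Y_{(i)}$ being a.s.\ finite for every $i$---shows this $\limsup$ equals $0$.

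For the second term, hypothesis~(ii) gives $\Measure\t v(t)\to L(t)$ for every $t>0$. For fixed $0<t_1\le\cdots\le t_k$, the joint distribution function of the first $k$ order statistics of a Poisson process with intensity measure $\mu$, evaluated at $(t_1,\ldots,t_k)$, is---by the independence and Poisson law of the increments of the process---a continuous function of $\mu((0,t_1]),\ldots,\mu((0,t_k])$; since $\Measure\t v((0,t_i])\to L((0,t_i])$ for each $i$, these joint distribution functions converge at every $(t_1,\ldots,t_k)$ to the corresponding one for intensity $L$, i.e.\ $\tilde{\mathbf Y}\t v\convd\mathbf Y$. Feeding the two estimates into the displayed triangle inequality then yields $\law(\mathbf V\t v)\convd\law(\mathbf Y)$, as claimed. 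The essential difficulty---and the reason this is not a one-line corollary of Theorem~\ref{thmorder}---is that total variation is too strong for the final comparison with the limiting Poisson process, which forces the detour through $\tilde Z\t v$ and the weaker metric $\rho$; the remaining work, namely verifying $\sum_{x_i\in\phi}p_{x_i}\t v(\tau)^2\to0$ and arranging the limits in the order ``$\limsup_v$ first, then $\tau\to\infty$'', is routine, as is the borderline case $L(\infty)<\infty$, which does not occur under the standing reading that the $Y_{(i)}$ are genuine finite random variables.
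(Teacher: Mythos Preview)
Your proof is correct and follows essentially the same route as the paper: interpose a Poisson process with intensity $\Measure\t v$, bound the first leg in total variation via Theorem~\ref{thmorder}, handle the second leg by weak convergence from $\Measure\t v\to L$, take $\limsup_v$ first and then $\tau\to\infty$. The only noteworthy difference is that where the paper cites \cite{keeler2016wireless} for $\sum_{x_i\in\phi}p_{x_i}\t v(\tau)^2\to0$, you supply a self-contained argument via the last inequality of Theorem~\ref{thmdtv} and condition~(i); and your justification for $L(\infty)=\infty$ (that each $Y_{(i)}$ must be a genuine random variable) differs in wording from the paper's (``since $\phi$ is infinite''), but both are adequate.
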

\begin{IEEEproof}
Let  {$Y\t v_{(i)}$} be the  {$i$th smallest} order statistic of a 
Poisson process with intensity measure~$M\t v(t)$.
Then due to the convergence of mean measures, 
 {$\law(Y\t v_{(1)},\ldots,Y\t v_{(k)})\to\law(Y_{(1)},\ldots,Y_{(k)})$}. We now use Theorem~\ref{thmorder}
to show the total variation distance between the $V\t v_{(i)}$'s and $Y\t v_{(i)}$'s tends 
to zero. For fixed $\tau>0$, it was shown 
in \cite{keeler2016wireless} (see Theorems~1.1 and Corollary~2.5) that 
under the conditions of the theorem the first term from the bound of Theorem~\ref{thmorder} tends 
to zero as $v\to\infty$. Thus for all  {$\tau> 0$, with $L$ in Theorem~\ref{thmshadow}},
\OneOrTwoColumnDisplay{
$
\limsup_{v\to\infty}\dtv( {\law(V\t v_{(1)},\ldots, V\t v_{(k)}),  \law(Y\t v_{(1)},\ldots,Y\t v_{(k)}) )}
\leq \sum_{j=0}^{k-1} \frac{L(\tau)^j e^{-L(\tau)}}{j!},
$}{
\begin{align*}
&\limsup_{v\to\infty}\dtv( {\law(V\t v_{(1)},\ldots, V\t v_{(k)}),  \law(Y\t v_{(1)},\ldots,Y\t v_{(k)}) )}\\
&\qquad\leq \sum_{j=0}^{k-1} \frac{L(\tau)^j e^{-L(\tau)}}{j!},
\end{align*}
}
and the proof is completed by sending $\tau\to\infty$ (note that  {$\lim_{\tau\to\infty}L(\tau)=\infty$} since $\phi$ is infinite).
\end{IEEEproof}


\subsection{Estimating $\Measure(t)$}\label{s.estimateM}
The empirical distribution of $ Y_{(1)}$, denoted by $\hat{E}(t)$, gives a way to statistically estimate or fit $\Measure(t)$ by first assuming that the transmitters are positioned according to a Poisson model, even if the transmitters don't appear Poisson, and then approximating with $\Prob( V_{(1)} \leq t)$.   One measures the largest signal in different locations and  fits the empirical distribution of $ Y_{(1)}$  to the equation: 
$
-\log[1-\hat{E}(t)] = \hat{\Measure(t)} ,\ t>0 ,
$
where $\hat{\Measure(t)}$ is the estimate of $M(t)$. 

For a large cellular phone network, the intensity measure of $N$ has been fitted to experimental signal data under a Poisson network model with $\ell(x)=|x|^{-\beta}$~\cite{blaszczyszyn2014wireless}; also see~\cite{blaszczyszyn2015spatial} for models with antenna patterns. Recent work~\cite{lu2015stochastic} studies the intensity measure of the path-loss process (so the process $N$ with all $\{S_i\}_{i\geq 1}$ set to some constant) by using an advanced path-loss model, which can be used in our setting,  and geographic data from cellular networks in two cities. It is not remarked that this first-moment approach would hold for any stationary point process with density $\lambda>0$, where the intensity measure of the path-loss process is given by (\ref{intensity2a}), but if  random  $\fade$ was incorporated into the model, our results suggest that the Poisson model would be the most appropriate.

\section{Conclusion}
We justified the Poisson approximation  process in a general setting and derived results on the order statistics of the signal strengths. Similar results hold for other functions of  {signal} strengths if the function is relatively well-behaved around the origin (or at infinity for the inverse signal strengths). 
An interesting observation is that the stronger signals behave more Poisson, which is convenient for statistics that only depend on the strongest signals. For a single-observer perspective of a wireless network, our results suggest that the focus should not be on the locations of the transmitters, but rather studying the process of (inverse) power values on the real line. We described a simple procedure for statistically estimating the intensity measure of this process. 
We encourage future work to explore the practical validity of these results.  For example, independence may not be a justifiable assumption for (large-scale) shadowing. But for localized shadowing dependence, we expect similar results  {presented} here, since the kind of Poisson approximation our results rely on typically allows for such  dependence. In summary, if the network has many transmitters, the chance of any transmitter having a strong signal is small, and the dependence between signals of transmitters is localized, then the point process of signal strengths is close to a Poisson point process with the same intensity measure. If these assumptions are justifiable, then practitioners can estimate the intensity measure from empirical data and draw inferences under Poisson probabilities.

{\small
\pdfbookmark[0]{References}{References} 
\bibliographystyle{IEEEtran}
\bibliography{Convergence}
}

\end{document}